\def\BibTeX{{\rm B\kern-.05em{\sc i\kern-.025em b}\kern-.08em
    T\kern-.1667em\lower.7ex\hbox{E}\kern-.125emX}}
\newtheorem{theorem}{Theorem}
\newtheorem{lemma}[theorem]{Lemma}
\newcommand{\mA}{m^{A\backslash B}}
\newcommand{\mAB}{m^{AB}}
\newcommand{\mB}{m^{B\backslash A}}
\newcommand{\A}{A\backslash B}
\newcommand{\AB}{AB}
\newcommand{\B}{B\backslash A}
\begin{document}

\title{Impact of Geographical Separation on Spectrum Sharing Markets
}
\author{
\IEEEauthorblockN{Kangle Mu, Zongyun Xie, Igor Kadota, and Randall Berry}
\IEEEauthorblockA{Department of Electrical and Computer Engineering\\
    Northwestern University\\
    \{kangle.mu, zongyun.xie, kadota, rberry\}@northwestern.edu}\\%
}

\maketitle

\begin{abstract}
With the increasing demand for wireless services, spectrum management agencies and service providers (SPs) are seeking more flexible mechanisms for spectrum sharing to accommodate this growth. Such mechanisms impact the market dynamics of competitive SPs. Prior market models of spectrum sharing largely focus on scenarios where competing SPs had identical coverage areas.  We depart from this and consider a scenario in which two competing SPs have overlapping but distinct coverage areas. We study the resulting competition using a Cournot model.
 Our findings reveal that with limited shared bandwidth, SPs might avoid overlapping areas to prevent potential losses due to interference. Sometimes SPs can strategically cooperate by agreeing not to provide service in the overlapping areas and, surprisingly, customers might also benefit from such cooperation under certain circumstances. Overall, market outcomes exhibit complex behaviors that are influenced by the sizes of coverage areas and the bandwidth of the shared spectrum.
\end{abstract}

\begin{IEEEkeywords}
spectrum management, dynamic spectrum assignment, market modeling
\end{IEEEkeywords}

\section{Introduction}
Spectrum sharing is receiving increased interest to meet the ever growing demands for wireless services. Examples include the recent U.S. National Spectrum Strategy \cite{nationalspectrum} and programs such as the Spectrum Innovation Initiative: National Radio Dynamic Zones (SII-NRDZ) that seeks to further advance dynamic spectrum sharing \cite{nrdz}. 

Spectrum sharing involves both technical and economic dimensions in that it impacts how service providers (SPs) compete with each other. 
Models of competition with various forms of shared spectrum have been studied including 
\cite{nguyen2016cost,maille2012competition,berry2017value, mu2024pooling,mu2024relaxation,xie2023market}.  In these works, it was assumed that all competing SPs had the same coverage area, so that any customer could be served by any SP.  In this paper we depart from this and consider an example where two competing SPs are sharing the same spectrum and have distinct, partially overlapping coverage areas.  For example, this could model two WiFi providers sharing the same band of unlicensed spectrum, but with different coverage due to the placement of their access points (APs).  We seek to understand the impact of the geographic separation between the SPs on their competition.

We consider a scenario in which two competing SPs each have an AP at a distinct location with overlapping coverage. We assume that the same spectrum band is used by two SPs.  We categorize the coverage into two types: dedicated areas served exclusively by one AP, and an overlapping area served by both APs. 
The SPs compete for a pool of customers spread across these areas.   The users are congestion-sensitive in that the price they are willing to pay depends on a congestion cost, which in turn varies across these areas, modeling different levels of interference that may occur. 
As in \cite{berry2017value}, we adopt a Cournot competition model in which both SPs specify the number of customers they want to serve in both the dedicated and overlapping sub-markets. 
Our main results are as follows:
\begin{itemize}[leftmargin=0.15in]
  \item We prove that a unique Nash equilibrium always exists in the proposed model.
  \item With limited bandwidth, SPs typically avoid overlapping areas to minimize the risk of significant congestion, which could adversely affect their revenue. However, with sufficient bandwidth, SPs will enter the overlapping sub-market. If this happens, the SPs may incur revenue losses due to the competition in the overlapping market, and consumer surplus may be redcued.
  \item SPs might want to cooperate by agreeing not to serve users in the overlapping areas to avoid competition. Surprisingly, sometimes consumers can also benefit from such cooperation in the sense of total consumer surplus, which in turn leads to higher social welfare. However, this may also raise concerns about fairness, as no customers are served in the overlapping areas.
  \item Market outcomes, including consumer surplus and social welfare, exhibit a complex dynamic and may not necessarily increase with the bandwidth provided to the SPs. This suggests that regulators need to carefully determine the amount of shared spectrum to optimize these outcomes.
\end{itemize}

Regarding related work, we follow the stream of modeling wireless spectrum as congestible resources \cite{nguyen2016cost, berry2017value, mu2024pooling, mu2024relaxation} and our analysis builds on the framework in \cite{berry2017value} where a market with intermittent spectrum is considered. Here, we instead consider a non-intermittent band of spectrum and account for the geographical differences in the locations of SPs. Our work is also related to work on access point or base station placement, e.g.~\cite{altman2012spatial, al2020UAV, salameh2022intelligent, coluccia2012sinr}, but these studies focus more on the technical aspects of spectrum usage rather than its economic impacts.  This work also has ties to approaches that SPs could use to co-ordinate their spectrum usage, such as the use of Spectrum Consumption Models (SCM)~\cite{caicedo2015SCM,bastidas2018ieee,Large-scale,SCM_COSMOS}.

\section{Market Model}
We consider a model in which there are 2 SPs (SP1 and SP2), each deploying an AP in distinct locations but with overlapping coverage. Both SPs compete for a common pool of non-atomic customers, who are categorized into three groups according to the coverage area they fall under. Let $A$ and $B$ be the sets of users under the coverage of SP1 and SP2, respectively. Then, we have three sub-markets with the corresponding sets of customers denoted by $A\backslash B$, $AB = A\cap B$, and $B\backslash A$, and with the market sizes $\mA$, $\mAB$, and $\mB$, respectively. For ease of analysis, we assume customers are non-atomic and, without loss of generality, the total market size $\mA+\mAB+\mB=1$.

In each sub-market, Cournot competition is considered in which SPs announce the quantity of users they want to serve, and this in turn leads to a market-clearing price \cite{berry2017value}. Let $x_1^{\A}$, $x_1^{\AB}$, $x_2^{\AB}$, and $x_2^{\B}$ denote the quantities of users served by the SPs indicated by their subscripts. Fig.~\ref{figure_venn} shows how users served by different SPs fall into different sub-markets. Since SPs can only serve users under their own coverage, we have the following constraints
\begin{IEEEeqnarray}{rl}
0 \leq& x_1^{\A}\leq \mA, \label{equ_x_dom1}\\
0 \leq& x_1^{\AB} + x_2^{\AB}\leq \mAB, \label{equ_x_dom2}\\
0 \leq& x_2^{\B}\leq \mB.\label{equ_x_dom3}
\end{IEEEeqnarray}

We define the market-clearing prices of each sub-market as follows
\begin{IEEEeqnarray}{ll}
p_d^{\A} &= 1 - \frac{x_1^{\A}}{\mA},\label{equ_pd_A}\\
p_d^{\AB} &= 1 - \frac{x_1^{\AB}+x_2^{\AB}}{\mAB},\label{equ_pd_AB}\\
p_d^{\B} &= 1 - \frac{x_2^{\B}}{\mB}.\label{equ_pd_B}
\end{IEEEeqnarray}

\begin{figure}[tb]
\centering
\includegraphics[width=5.2cm]{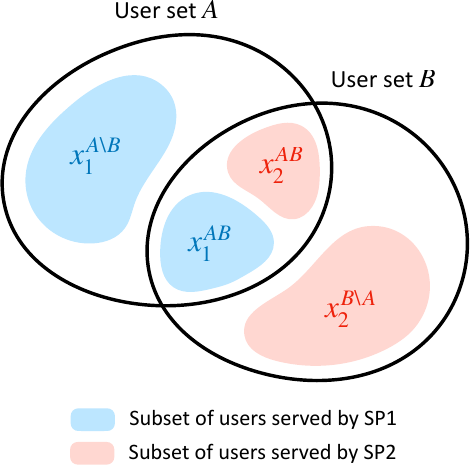}
\caption{Venn diagram of three sub-markets and the sets of users served by different SPs.}
\label{figure_venn}
\end{figure}

These definitions are based on the following assumptions:
\begin{enumerate}
\item We assume that the whole market has the following linear (inverse) demand function
\begin{IEEEeqnarray}{ll}
p_d&= 1 - x,\label{equ_pd}
\end{IEEEeqnarray}
where $x=x_1^{\A} + x_1^{\AB} + x_2^{\AB} + x_2^{\B}$ is the total quantity of users served in the whole market.
\item Customers, differing in their valuation (i.e., utility) of the wireless service, are assumed to be uniformly distributed among sub-markets. In other words, within each sub-market, there are all types of users ranging from high-value users who are willing to pay for the service with a higher price to low-value users who have a limited budget. The mass of different types of users is proportional to the sub-market sizes.
\end{enumerate}
One can verify that under definition (\ref{equ_pd_A})--(\ref{equ_pd_B}), the only way to get the same market-clearing price across three sub-markets is to serve users in proportion to the sizes of sub-market, and the resulting price is given by \eqref{equ_pd}.

Next, we derive the consumer surplus in each sub-market. Take market $\A$ as an example. Equation (\ref{equ_pd_A}) specifies the inverse demand in market $\A$, i.e., it indicates the minimum price at which a mass of  $x_1^{\A}$ customers would accept service.  It follows that the surplus of the $x$th user is given by $1-x/m^{\A}-p_d^{\A}$. To derive the consumer surplus of the entire sub-market, we need to integrate the surplus over $x$ from $0$ to $x_1^{\A}$. This results in the consumer surplus of each sub-market given by
\begin{IEEEeqnarray}{ll}
CS^{\A} &=  \frac{(x_1^{\A})^2}{2\mA},\label{eq_cs_A}\\
CS^{\AB} &=  \frac{(x_1^{\AB}+x_2^{\AB})^2}{2\mAB},\label{equ_cs_AB}\\
CS^{\B} &= \frac{(x_2^{\B})^2}{2\mB}.\label{equ_cs_B}
\end{IEEEeqnarray}

We refer to the market clearing prices in (\ref{equ_pd_A})--(\ref{equ_pd_B}) as {\it delivered prices}. We assume that users are sensitive to congestion that is measured in terms of {\it latency costs}. Then the {\it service price} charged by an SP to its users is given by the difference between the delivered price and the latency cost. This models users that may avoid low-cost but poor-quality wireless services. A user will use the service of a SP only if the sum of both costs is lower than their valuation of this service. 

We assume the latency cost incurred by users on a band in a given sub-market depends on the total number of users impacting that sub-market divided by its bandwidth, where as described below a user may impact (e.g. cause interference) a sub-market even if it is not present in that sub-market.
 If both SPs use the same band, the traffic from both should be considered in this calculation. Assume a band with bandwidth $W$ is used by both SPs. Considering their different coverage, we define the latency costs for the three sub-markets as follows
\begin{IEEEeqnarray}{ll}
l^{\A}&= \frac{x_1^{\A}+x_1^{\AB}+x_2^{\AB}}{W},\label{equ_letency_A}\\
l^{\AB}&= \frac{x_1^{\A}+x_1^{\AB}+x_2^{\AB}+x_2^{\B}}{W},\label{equ_letency_AB}\\
l^{\B}&= \frac{x_1^{\AB}+x_2^{\AB}+x_2^{\B}}{W}.\label{equ_letency_B}
\end{IEEEeqnarray}
We have a few comments on (\ref{equ_letency_A})--(\ref{equ_letency_B}):
\begin{enumerate}
    \item  Users within the same sub-market incur the same latency cost regardless of which AP they are connected to. To simplify the model, we assume the latency incurred by users mainly comes from the congestion caused by user traffic as opposed to path loss or shadowing.
    \item The latency model here is motivated by WiFi in which carrier-sense multiple access with collision avoidance (CSMA/CA) is used for multiple access. Namely, we assume that at most one user within the range of a given AP can transmit at a time (or be transmitted to by the AP). For example, the latency of users in $\A$ will then depend on $x_1^{\AB}$,  $x_2^{\AB}$, and $x_1^{\A}$ as all of these users are within range of AP1.  Likewise, the latency of  users in $\AB$ will depend on the number of users within range of either AP as these users are within range of both APs.\footnote{This is  a  simplification of an actual WiFi setting made to capture the key feature that with different geographic coverage, latency costs will depend on users within and external to a sub-market.} 
    \item From (\ref{equ_letency_AB}) one can conclude that market $AB$ is always ``more crowded'' compared to market $\A$ and $\B$ as users there always incur higher latency costs. This may give SPs a preference for market $\A$ and $\B$. In the next section, we will see how this preference would change after factoring in market sizes. 
\end{enumerate}

The revenue of an SP is the product of its service price and the quantity of users it serves. Thus, the revenue of each SP is given by
\begin{IEEEeqnarray}{rl}
R_1 &= x_1^{\A} \left(p_d^{\A}- l^{\A}\right) + x_1^{\AB} \left(p_d^{\AB}- l^{\AB}\right),\label{equ_R1}\\
R_2 &= x_2^{\B} \left(p_d^{\B}- l^{\B}\right) + x_2^{\AB} \left(p_d^{\AB}- l^{\AB}\right).\label{equ_R2}
\end{IEEEeqnarray}
Note here we  are assuming that the SPs can differentiate their prices across sub-markets, which requires them to know which users are in which sub-market.  This could be learned through measurements or shared if the SPs coordinate with each other, e.g. by using SCMs \cite{caicedo2015SCM}.

The SPs' goal is to maximize their revenue (\ref{equ_R1}) and (\ref{equ_R2}) by carefully choosing the quantities of users to serve for each sub-market (i.e., $x_1^{\A}$, $x_1^{\AB}$, $x_2^{\AB}$, and $x_2^{\B}$). The revenues of SP1 and SP2 couple with each other's decision through both the delivered prices (\ref{equ_pd_A})--(\ref{equ_pd_B}) and the latency costs (\ref{equ_letency_A})--(\ref{equ_letency_B}). Such coupling makes it a game between SP1 and SP2. We will discuss the Nash equilibrium and the corresponding market outcomes in the next section.

\section{Main Results}
In this section, we first characterize the Nash equilibrium of this two-player game, and based on that, we will examine welfare measures such as consumer surplus and social welfare.
\subsection{Equilibrium}
\begin{theorem}[Uniqueness of Nash equilibrium]
\label{theorem_uniqueness}
There always exists a unique Nash equilibrium for any bandwidth $W$, and sub-market sizes $\mA$, $\mAB$, and $\mB$.

For symmetric cases in which $\mA = \mB$, the quantities of users served by SP1 and SP2 at the equilibrium are given as follows:

\begin{IEEEeqnarray}{ll}
    x_1^{\A}&=
    \begin{cases} 
    \frac{Wm^{\A}}{2(W+m^{\A})}, & 0 \leq W < \frac{m^{\A}}{2}\vspace{5pt}\\
      \frac{W m^{\A}}{C}, & W\geq \frac{m^{\A}}{2}
   \end{cases}\label{equ_x1}\\
   x_1^{\AB}&=
    \begin{cases} 
    0, & 0 \leq W < \frac{m^{\A}}{2}\vspace{5pt}\\
      \frac{(2W-m^{\A}) m^{\AB}}{3C}, & W\geq \frac{m^{\A}}{2}
   \end{cases}\label{equ_x2} \\
   x_2^{\B} &= x_1^{\A}, \mbox{ and }  x_2^{\AB} = x_1^{\AB}, \nonumber
\end{IEEEeqnarray}
where
\begin{IEEEeqnarray}{c}
C=2(W+m^{\A}+m^{\AB})-\frac{m^{\A}m^{\AB}}{W}.
\end{IEEEeqnarray}
\end{theorem}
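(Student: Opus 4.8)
\emph{Plan of proof.}
My plan is to handle existence by standard concave-game arguments, then pin down the equilibrium structure, and finally specialize to the symmetric case to obtain the closed forms.

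\emph{Existence.} First I would note that, for fixed opponent quantities, each $R_i$ is strictly concave in that player's own variables: $R_1$ is quadratic in $(x_1^{\A},x_1^{\AB})$ with Hessian
\[
\begin{pmatrix}-\tfrac{2}{\mA}-\tfrac{2}{W} & -\tfrac{2}{W}\\[3pt] -\tfrac{2}{W} & -\tfrac{2}{\mAB}-\tfrac{2}{W}\end{pmatrix},
\]
which has negative diagonal and determinant $\tfrac{4}{\mA\mAB}+\tfrac{4}{\mA W}+\tfrac{4}{\mAB W}>0$, hence is negative definite, and likewise for $R_2$. Since the feasible set defined by \eqref{equ_x_dom1}--\eqref{equ_x_dom3} is nonempty, compact and convex (and the only cross-player coupling, the upper bound in \eqref{equ_x_dom2}, is slack at any equilibrium, as binding it would force a nonpositive delivered price in the overlap), a pure Nash equilibrium exists.

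\emph{Uniqueness.} I would write the pseudo-gradient $F(x)=\bigl(\partial_{x_1^{\A}}R_1,\ \partial_{x_1^{\AB}}R_1,\ \partial_{x_2^{\AB}}R_2,\ \partial_{x_2^{\B}}R_2\bigr)$, which is affine in $x$, and compute its Jacobian $J$. A direct calculation shows $J$ is symmetric, so the game is an exact potential game with potential $\Phi$; moreover the diagonal blocks of $J$ are exactly the two own-Hessians above, so $\Phi$ is concave in each player's own variables and Nash equilibria coincide with the solutions of the variational inequality $\langle -F(x^\ast),x-x^\ast\rangle\ge 0$ over the feasible set. The subtlety — and this is where the work is — is that $-J$ is \emph{not} positive definite for all parameters (it fails, e.g., for small $W$), so $\Phi$ is not jointly concave and Rosen's diagonal-strict-concavity criterion does not close the argument. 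I would therefore establish uniqueness by a finite case analysis on which of $x_1^{\A},x_1^{\AB},x_2^{\AB},x_2^{\B}$ are strictly positive at an equilibrium: on the face where the overlap quantities vanish, $\Phi$ restricts to a strictly concave function of $(x_1^{\A},x_2^{\B})$, so there is at most one equilibrium there; on each face where the overlap is active the stationarity conditions form a linear system with a unique (and, when feasible, admissible) solution; and the sign conditions deciding whether $x_1^{\AB},x_2^{\AB}$ are zero are monotone enough in the parameters that these patterns cannot coexist. An alternative would be to show the single-valued best-response map is a contraction by bounding the norm of the product of the two reaction-slope matrices.

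\emph{Symmetric case.} For $\mA=\mB$, uniqueness together with invariance of the game under swapping SP1 and SP2 (and the labels $\A\leftrightarrow\B$) forces $x_1^{\A}=x_2^{\B}$ and $x_1^{\AB}=x_2^{\AB}$ at equilibrium, reducing the problem to two unknowns. I would split on whether the overlap market is entered. If $x_1^{\AB}=x_2^{\AB}=0$, then $\partial_{x_1^{\A}}R_1=0$ gives $x_1^{\A}=\tfrac{W\mA}{2(W+\mA)}$, and substituting into $\partial_{x_1^{\AB}}R_1$, the optimality requirement $\partial_{x_1^{\AB}}R_1\le 0$ for this corner reduces exactly to $W<\mA/2$. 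In the complementary regime $W\ge\mA/2$, the interior stationarity conditions $\partial_{x_1^{\A}}R_1=\partial_{x_1^{\AB}}R_1=0$ (after the symmetry substitution) are two linear equations whose solution is the stated pair with $C=2(W+\mA+\mAB)-\mA\mAB/W$; I would then check $C>0$ and that the solution satisfies \eqref{equ_x_dom1}--\eqref{equ_x_dom3}, all of which follow from $W\ge\mA/2$. By uniqueness, the candidate produced in each regime is the equilibrium.

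The step I expect to be the main obstacle is the general uniqueness claim: because the game is a potential game with an indefinite Hessian, the standard monotonicity route is unavailable and one must instead combine strict concavity of each player's own subproblem with a careful enumeration of the active-constraint patterns (or a contraction estimate). Existence and the symmetric closed-form computation — including identifying $W=\mA/2$ as the threshold between the two regimes — are routine once the first-order conditions are written down.
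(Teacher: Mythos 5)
Your skeleton (exact potential game, case split at $W=\mA/2$, boundary analysis when the interior solution is infeasible) matches the paper's, and your existence argument and symmetric closed-form computation --- including identifying the corner condition $\partial_{x_1^{\AB}}R_1\le 0$ with the threshold $W\le \mA/2$ --- are sound. The gap is in the uniqueness argument, which you correctly flag as the main obstacle but do not close. You dismiss joint concavity of the potential too quickly: while the Jacobian is indeed indefinite for small $W$, the paper's key observation is that in the symmetric case with $W\ge \mA/2$ one has $\mA\le 2W$ and $\mAB\le 1$, and substituting these extremal values bounds the quadratic form $\mathbf{y}^T\mathbf{A}\mathbf{y}$ from below by an expression that is shown (by minimizing over $y_1,y_4$) to be strictly positive for $\mathbf{y}\neq 0$. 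So the potential \emph{is} strictly concave precisely in the regime where the interior FOC solution is the candidate, and global uniqueness there follows immediately, with no active-set enumeration needed.

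Missing this pushes you onto the case analysis for all parameters, where the second problem sits: showing each face supports at most one stationary point does not preclude equilibria with different activity patterns from coexisting, and your statement that ``the sign conditions \dots are monotone enough in the parameters that these patterns cannot coexist'' is exactly the claim that requires proof. The paper closes this for $W<\mA/2$ by (i) noting the interior FOC solution is infeasible, so every equilibrium lies on a boundary; (ii) ruling out the market-size boundaries because they force a nonpositive service price; and (iii) ruling out every nonnegativity pattern except $x_1^{\AB}=x_2^{\AB}=0$ with $x_1^{\A},x_2^{\B}>0$ via explicit profitable deviations (its Lemmas 1--3: an SP never serves only the more congested overlap while abandoning its dedicated market, is never squeezed out of the market entirely, and is never unilaterally absent from the overlap while its rival is present). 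You would need to supply deviation arguments of this kind --- or actually carry out the contraction estimate on the best-response map that you mention --- before the uniqueness claim stands.
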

\begin{proof}
Due to space considerations, we only provide the proof for the symmetric case, i.e., $\mA = \mB$. 

Before diving into the proof of uniqueness, we first provide some intuition, which is important for understanding how the proof is constructed. 

One can solve the following first-order conditions to gain insight into a potential equilibrium:
\begin{IEEEeqnarray}{c}
   \left[\frac{\partial R_1}{\partial x_1^{\A}}~\frac{\partial R_1}{\partial x_1^{\AB}} ~\frac{\partial R_2}{\partial x_2^{\AB}}~\frac{\partial R_2}{\partial x_2^{\B}}\right]^T
    =\mathbf{0}.\label{equ_foc}
\end{IEEEeqnarray}
The solution is given by the $W\geq \frac{m^{\A}}{2}$ cases in (\ref{equ_x1}) and (\ref{equ_x2}), in which $x_1^{\AB}$ will be negative if $W< \frac{m^{\A}}{2}$. Then we can guess that both SPs may abandon market $\AB$ (i.e., $x_1^{\AB}=x_2^{\AB}=0$) when the bandwidth $W$ is not large enough. But when the bandwidth is larger than $\frac{m^{\A}}{2}$, the solution to (\ref{equ_foc}) might be an equilibrium. Thus we will consider two cases, namely, $W\geq \frac{m^{\A}}{2}$ and $W < \frac{m^{\A}}{2}$, and prove the uniqueness of equilibrium for each case.

{\it Case $W\geq \frac{m^{\A}}{2}$}: We first show that the equilibrium is unique by showing it is a potential game and the potential function is strictly concave. Then we show that the solution to (\ref{equ_foc}) is feasible and thus is indeed the unique equilibrium. 

Let $\mathbf{x}_1=\left[x_1^{\A}, x_1^{\AB}\right]^T$ and $\mathbf{x}_2=\left[x_2^{\B}, x_2^{\AB}\right]^T$. Then we can construct a potential function from revenue (\ref{equ_R1}) and (\ref{equ_R2}) as follows:

\begin{IEEEeqnarray}{cl}
&\Phi(\mathbf{x}_1, \mathbf{x}_2) =  \nonumber\\
&- \Biggl[\left(\frac{1}{\mA}+\frac{1}{W}\right){x_1^{\A}}^2 + \left(\frac{1}{\mAB}+\frac{1}{W}\right){x_1^{\AB}}^2\nonumber\\
&+ \left(\frac{1}{\mAB}+\frac{1}{W}\right){x_2^{\AB}}^2 + \left(\frac{1}{\mB}+\frac{1}{W}\right){x_2^{\B}}^2\nonumber\\
&+\left(\frac{1}{\mAB}+\frac{1}{W}\right)x_1^{\AB}x_2^{\AB} + \frac{2}{W}x_1^{\A}x_1^{\AB} \nonumber\\
&+\frac{1}{W}x_1^{\A}x_2^{\AB} + \frac{1}{W}x_2^{\B}x_1^{\AB}+\frac{2}{W}x_2^{\B}x_2^{\AB}\Biggl]\nonumber\\
&+ x_1^{\A} + x_2^{\B} + x_1^{\AB} + x_2^{\AB}.
   \label{equ_potential}
\end{IEEEeqnarray}
One can verify that 
\begin{IEEEeqnarray}{c}
\Phi(\mathbf{x}^\prime_1, \mathbf{x}_2)-\Phi(\mathbf{x}_1, \mathbf{x}_2) = R_1(\mathbf{x}^\prime_1, \mathbf{x}_2) - R_1(\mathbf{x}_1, \mathbf{x}_2), \\
\Phi(\mathbf{x}_1, \mathbf{x}^\prime_2)-\Phi(\mathbf{x}_1, \mathbf{x}_2) = R_2(\mathbf{x}_1, \mathbf{x}^\prime_2) - R_2(\mathbf{x}_1, \mathbf{x}_2).
\label{equ_potential_def}
\end{IEEEeqnarray}

We can rewrite (\ref{equ_potential}) in a quadratic form 
\begin{IEEEeqnarray}{c}
\Phi(\mathbf{x}) = -\mathbf{x}^T\mathbf{A}\mathbf{x} + \mathbf{1}^T\mathbf{x},
\label{equ_potential_quadra}
\end{IEEEeqnarray}
where 
\begin{IEEEeqnarray}{l}
\hspace{70pt}\mathbf{A}=\nonumber\\
\begin{bmatrix}
\frac{1}{\mA}+\frac{1}{W} & \frac{1}{W} & \frac{1}{2W} & 0\\
\frac{1}{W} & \frac{1}{\mAB}+\frac{1}{W} & \frac{1}{2\mAB}+\frac{1}{2W} & \frac{1}{2W} \\
\frac{1}{2W} & \frac{1}{2\mAB}+\frac{1}{2W} & \frac{1}{\mAB}+\frac{1}{W} & \frac{1}{W} \\
0 & \frac{1}{2W} & \frac{1}{W} & \frac{1}{\mB}+\frac{1}{W}
\end{bmatrix},\nonumber	\\
\label{equ_potential_A}
\end{IEEEeqnarray}
and 
\begin{IEEEeqnarray}{c}
\mathbf{x}=\left[x_1^{\A} ~ x_1^{\AB} ~ x_2^{\AB} ~ x_2^{\B}\right]^T.
\end{IEEEeqnarray}
Next, we prove the uniqueness by showing that $\mathbf{A}$ is positive definite given $m^{\A}=m^{\B}$, i.e., $\mathbf{y}^T\mathbf{A}\mathbf{y}>0$, $\forall \mathbf{y}\in\mathbb{R}^4$.

From (\ref{equ_potential_A}), we observe that $\mathbf{y}^T\mathbf{A}\mathbf{y}$ is a decreasing function of $m^{\A}$ as $1/m^{\A}$ only appears in the diagonal. We can also show that it is a decreasing function of $m^{\AB}$ as 
\begin{IEEEeqnarray}{c}
\begin{bmatrix}
y_2 & y_3
\end{bmatrix}
\begin{bmatrix}
\frac{1}{\mAB}+\frac{1}{W} & \frac{1}{2\mAB}+\frac{1}{2W} \\
\frac{1}{2\mAB}+\frac{1}{2W} & \frac{1}{\mAB}+\frac{1}{W}
\end{bmatrix}
\begin{bmatrix}
y_2 \\ y_3
\end{bmatrix}\nonumber\\
= \left(\frac{1}{\mAB}+\frac{1}{W} \right)\frac{(y_2+y_3)^2+y_2^2+y_3^2}{2}.
\end{IEEEeqnarray}
Thus we can bound $\mathbf{y}^T\mathbf{A}\mathbf{y}$ from below by substituting in the maximum values of $m^{\A}$ and $m^{\AB}$:
\begin{IEEEeqnarray}{cl}
\mathbf{y}^T\mathbf{A}\mathbf{y} &\ge \mathbf{y}^T\mathbf{A}\mathbf{y}\big|_{m^{\A}=2W, ~m^{\AB}=1}\\
&=\frac{1}{W}\Bigg(\frac{3}{2}y_1^2+2y_1y_2+y_1y_3\nonumber\\
&+\frac{3}{2}y_4^2+2y_3y_4+y_2y_4\nonumber\\
&+(W+1)(y_2^2+y_3^2+y_2y_3)\Bigg).\label{equ_yAy_bound}
\end{IEEEeqnarray}
The lower bound (\ref{equ_yAy_bound}) is a quadratic function of $y_1$ and $y_4$, whose minimizer is given by $y_1^\star=-(2y_2+y_3)/3$ and $y_4^\star=-(2y_3+y_2)/3$. Thus by plugging in $y_1^\star$ and $y_4^\star$, we have
\begin{IEEEeqnarray}{cl}
\mathbf{y}^T\mathbf{A}\mathbf{y} &\ge \frac{1}{6W}(y_2-y_3)^2+\frac{1}{2}\left((y_2+y_3)^2+ y_2^2+y_3^2\right)\\
&\ge 0,
\end{IEEEeqnarray}
where the equality holds only when $y_2=y_3=0$ leading to $y_1^\star=y_4^\star=0$. Thus we can conclude that $\mathbf{y}^T\mathbf{A}\mathbf{y}>0$ for all $\mathbf{y}\neq 0$, which by definition proves $\mathbf{A}$ is positive definite. Therefore, the potential function (\ref{equ_potential}) is strictly concave, and thus the equilibrium is unique. The existence of an equilibrium is obvious as the set of $\mathbf{x}$ is compact because of (\ref{equ_x_dom1})--(\ref{equ_x_dom3}). 

To show (\ref{equ_x1}) and (\ref{equ_x2}) are indeed the solution, we only need to prove that they are feasible, i.e., $x_1^{\A}\leq \mA$ and $x_1^{\AB}\leq \mAB/2$. The first inequality is obvious since $W/C\leq 1$. For the second inequality, we substitute $m^{\AB}$ with $1-2m^{\A}$ and then it can be reduced to 
\begin{IEEEeqnarray}{c}
2W^2+(6-4m^{\A})W+6\left({m^{\A}}\right)^2-3m^{\A} \ge 0,
\label{equ_2ndreduced}
\end{IEEEeqnarray}
where the LHS is a quadratic function of $W$, with $W=m^{\A}-3/2<0$ as the minimizer. Recall that $W\ge \mA/2$, thus the actual minimizer is $W^\star = \mA/2$. Setting $W=W^\star$ in (\ref{equ_2ndreduced}), we have
\begin{IEEEeqnarray}{c}
\frac{9}{2}\left({\mA}\right)^2 \ge 0,
\end{IEEEeqnarray}
which always holds. Therefore, the solution in (\ref{equ_x1}) and (\ref{equ_x2}) is feasible and thus is the equilibrium.

{\it Case $W< \frac{m^{\A}}{2}$}: The potential function (\ref{equ_potential_quadra}) is not necessarily concave with $W< \frac{m^{\A}}{2}$ so we will adopt a different approach. The equilibrium(s) is always given by the solution to the following problem
\begin{IEEEeqnarray}{c}
   \underset{\mathbf{x}\ge 0}{\arg\max}~\Phi(\mathbf{x}) \\
   \text{s.t.  (\ref{equ_x_dom1})--(\ref{equ_x_dom3})}.\nonumber
\end{IEEEeqnarray}
And, as mentioned before, the first-order solution (i.e., the solution to $\nabla\Phi(\mathbf{x})=0$, which is equivalent to (\ref{equ_foc})) is not feasible. Thus the equilibrium(s) must lie on a boundary\footnote{Again, the existence of an equilibrium follows from the compact set formed by (\ref{equ_x_dom1})--(\ref{equ_x_dom3}).}. Next, we will show the uniqueness by excluding all boundaries except one.

First, it is easy to rule out the boundaries on market-size constraints (i.e., $x_1^{\A}=\mA$, $x_1^{\AB} + x_2^{\AB}= \mAB$, and $x_2^{\B}= \mB$). For example, with $x_1^{\A}=\mA$, the delivered price $p_d^{\A}$ is zero. This results in a non-positive service price $p_d^{\A}- l^{\A}$ charged by SP1 in market $\A$. Thus, SP1 can be better off if it stops serving users in this market (i.e., by letting $x_1^{\A}=0$) as it is currently paying these users for using its service. 

Second, to rule out the boundaries on non-negative constraints (i.e., $x_1^{\A}=0$, $x_1^{\AB}=0$, $x_2^{\AB}= 0$, and $x_2^{\B}=0$), we first prove the following lemmas which will help us check all combinations of these boundary conditions.
\begin{lemma}
\label{lemma_1}
Any $\mathbf{x}$ such that $x_1^{\A}=0$ but $x_1^{\AB}>0$ (or $x_2^{\B}=0$ but $x_2^{\AB}>0$) is not an equilibrium.
\end{lemma}
The intuition is that it is impossible to have an equilibrium in which an SP is willing to serve users in a more crowded market rather than a less crowded one.
\begin{proof}
    With $x_1^{\A}=0$ and $x_1^{\AB}>0$, we have 
    \begin{IEEEeqnarray}{cl}
        &p_d^{\A}- l^{\A}\nonumber\\
        =&1-\frac{x_1^{\AB}+x_2^{\AB}}{W}\nonumber\\
        >&1-\frac{x_1^{\AB}+x_2^{\AB}}{\mAB}-\frac{x_1^{\AB}+x_2^{\AB}+x_2^{\B}}{W}\nonumber\\
        =&p_d^{\AB}- l^{\AB},\label{equ_ineq_service_price}
    \end{IEEEeqnarray}
    which shows that the service price in market $\A$ is higher than that in market $\AB$.

    Consider a deviation $\mathbf{x}^\prime_1=\left[\Delta, x_1^{\AB}-\Delta\right]^T$ from $\mathbf{x}_1=\left[0, x_1^{\AB}\right]^T$ with $\Delta>0$. The difference in SP1's revenue for any $\mathbf{x}_2$ is
    \begin{IEEEeqnarray}{cl}
        &R_1(\mathbf{x}^\prime_1,\mathbf{x}_2)-R_1(\mathbf{x}_1,\mathbf{x}_2)\nonumber\\
        = &\Delta\Bigg[\left(p_d^{\A}- l^{\A} - \frac{\Delta}{\mA}\right)\nonumber\\
        & - \left(p_d^{\AB}- l^{\AB} + \frac{\Delta}{\mAB} \right) + \frac{x_1^{\AB}}{\mAB}\Bigg].
    \end{IEEEeqnarray}
    Given (\ref{equ_ineq_service_price}), we can find always find a small enough $\Delta$ such that $R_1(\mathbf{x}^\prime_1,\mathbf{x}_2)-R_1(\mathbf{x}_1,\mathbf{x}_2)>0$, indicating $\mathbf{x}^\prime_1$ is a profitable deviation from $\mathbf{x}_1$ and thus it is not an equilibrium.
    
\end{proof}

\begin{lemma}
\label{lemma_2}
Any $\mathbf{x}$ such that $x_1^{\A}=0$ and $x_1^{\AB}=0$ (or $x_2^{\B}=0$ but $x_2^{\AB}=0$) is not an equilibrium.
\end{lemma}
This lemma suggests that an SP cannot completely squeeze another SP out of the entire market.
\begin{proof}
    Consider a deviation $\mathbf{x}^\prime_1=\left[\Delta, 0\right]^T$ from $\mathbf{x}_1=\left[0, 0\right]^T$ with $\Delta>0$. For any $\mathbf{x}_2$, the revenue of SP1 is 
    \begin{IEEEeqnarray}{cl}
        &R_1(\mathbf{x}^\prime_1,\mathbf{x}_2)\\
        =&\Delta\left(1-\frac{\Delta}{\mA}-\frac{\Delta+x_2^{\AB}}{W}\right)\\
        =&\Delta\left(1-\frac{x_2^{\AB}}{W}-\left(\frac{1}{\mA}+\frac{1}{W}\right)\Delta\right).
    \end{IEEEeqnarray}
    Note that $1-\frac{x_2^{\AB}}{W}>0$ always holds for any $\mathbf{x}_2$ at an equilibrium, otherwise SP2 would incur a negative service price in both market $\AB$ and $\B$. Thus there always exists a small enough $\Delta$ such that $R_1(\mathbf{x}^\prime_1,\mathbf{x}_2)>0$, indicating $\mathbf{x}^\prime_1$ is a profitable deviation from $\mathbf{x}_1$ and thus it is not an equilibrium.
\end{proof}

\begin{lemma}
\label{lemma_3}
Any $\mathbf{x}$ such that $x_1^{\AB}=0$ but $x_2^{\AB}>0$ (or $x_2^{\AB}=0$ but $x_1^{\AB}>0$) is not an equilibrium.
\end{lemma}
This lemma suggests that an SP cannot completely squeeze another SP out of market $\AB$.
\begin{proof}
    Consider a strategy profile $\mathbf{x}=[x_1^{\A}, 0, x_2^{\AB},  x_2^{\B}]^T$ with $x_2^{\AB}>0$. If it is an equilibrium, we must have $x_1^{\A}>0$ by Lemma \ref{lemma_2} and $x_2^{\B}>0$ by Lemma \ref{lemma_1}. Thus, to be an equilibrium, $\mathbf{x}$ needs to be the solution to the following equations
    \begin{IEEEeqnarray}{c}
   \left[\frac{\partial R_1|_{x_1^{\AB}=0}}{\partial x_1^{\A}} ~\frac{\partial R_2|_{x_1^{\AB}=0}}{\partial x_2^{\AB}}~\frac{\partial R_2|_{x_1^{\AB}=0}}{\partial x_2^{\B}}\right]^T
    =\mathbf{0}. \label{equ_lemma3_equations}
\end{IEEEeqnarray}
One can verify that the solution always has $x_1^{\A}x_2^{\AB}<0$, indicating there is no feasible solution to (\ref{equ_lemma3_equations}).
\end{proof}

With Lemma \ref{lemma_1}--\ref{lemma_3}, we can check all 15 combinations of boundary conditions.\footnote{There are 4 $x$'s with total ${4\choose 1}+{4\choose 2}+{4\choose 3}+{4\choose 4}=15$ different combinations of boundary conditions.} For example, $x_1^{\A}=0$ and $x_1^{\AB},x_2^{\AB},x_2^{\B}>0$ can be ruled out by Lemma \ref{lemma_1}. The only boundary condition not eliminated is where $x_1^{\A}, x_2^{\B}>0$ and $x_1^{\AB}, x_2^{\AB}=0$. $x_1^{\A}$ and $ x_2^{\B}$ are the solution to the following equations 
\begin{IEEEeqnarray}{c}
\left[\frac{\partial R_1|_{x_1^{\AB}=x_2^{\AB}=0}}{\partial x_1^{\A}} ~\frac{\partial R_2|_{x_1^{\AB}=x_2^{\AB}=0}}{\partial x_2^{\B}}\right]^T
=\mathbf{0}.
\end{IEEEeqnarray}
The solution is given by the $0 \leq W < \frac{m^{\A}}{2}$ case in (\ref{equ_x1}) which is unique.

\end{proof}

\section{Numerical Results}

\subsection{Equilibrium Quantities}

\begin{figure}[tb]
    \centering
    \begin{subfigure}[b]{0.45\linewidth}
        \centering
        \includegraphics[width=\linewidth]{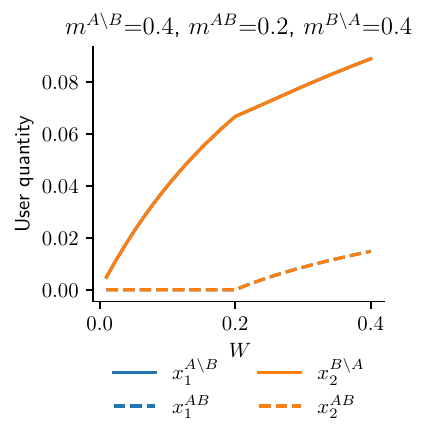}        
        \caption{Small market $\AB$}
        \label{fig_equilibrium_1}
    \end{subfigure}
    \hfill
    \begin{subfigure}[b]{0.45\linewidth}
        \centering
        \includegraphics[width=\linewidth]{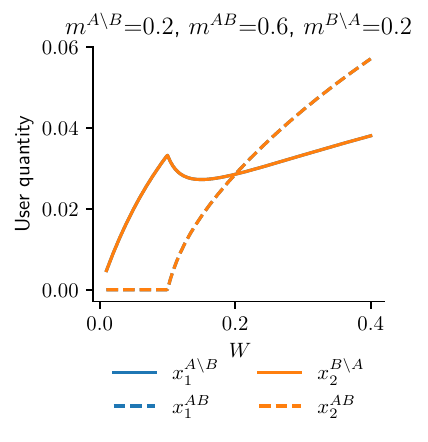}
        \caption{Large market $\AB$}
        \label{fig_equilibrium_2}
    \end{subfigure}
    \caption{User quantities at equilibrium versus bandwidth $W$ for two symmetric cases.}
    \label{fig_equilibrium_symm}
\end{figure}

\begin{figure}[tb]
    \centering
    \begin{subfigure}[b]{0.45\linewidth}
        \centering
        \includegraphics[width=\linewidth]{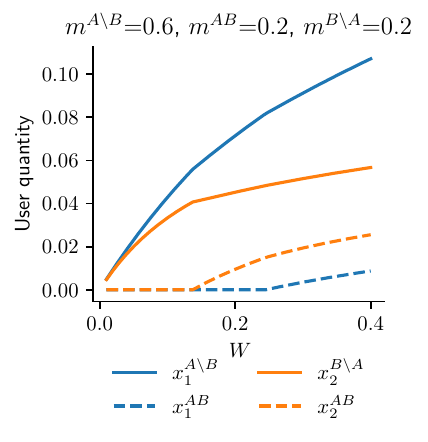}        
        \caption{Small market $\AB$}
        \label{fig_equilibrium_3}
    \end{subfigure}
    \hfill
    \begin{subfigure}[b]{0.45\linewidth}
        \centering
        \includegraphics[width=\linewidth]{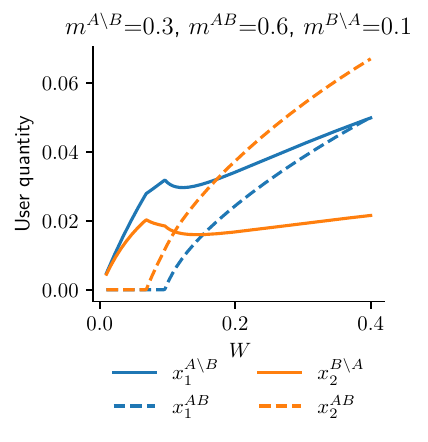}
        \caption{Large market $\AB$}
        \label{fig_equilibrium_4}
    \end{subfigure}
    \caption{User quantities at equilibrium versus bandwidth $W$ for two asymmetric cases.}
    \label{fig_equilibrium_asymm}
\end{figure}


In Fig.~\ref{fig_equilibrium_symm} we illustrate the unique equilibrium user quantities versus the bandwidth $W$ for two symmetric settings (i.e. when $m^{\A}= m^{\B}$) with different market size values $\mAB$. Note due to symmetry in these cases the user quantities for each SP are the same (e.g., $x^{\A} = x^{\B}$). As suggested in (\ref{equ_x2}), both SPs do not enter market $\AB$ until $W$ is greater than $\frac{m^{\A}}{2}$, which is $0.2$ and $0.1$ in Fig.~\ref{fig_equilibrium_1} and Fig.~\ref{fig_equilibrium_2}, respectively. The intuition is that SPs avoid overlapping areas to minimize latency costs when bandwidth is limited. However, even with this preference, SPs will still begin serving users in the overlapping area if bandwidth $W$ further increases. Initially, as $W$ increases the SPs serve more users on the non-overlapping sub-markets.  This decreases the delivered price in these areas and thus the marginal revenue gained by adding users, until at some point it is more attractive to add users on the overlapping area (with a higher delivered price).  Note also that when $\AB$ is larger as in 
Fig.~\ref{fig_equilibrium_2}, this increase in users in the overlapping area results in an initial {\it decrease} in users in the non-overlapping area.  This is because increasing traffic in the overlapping area increases the latency cost of SPs on both their overlapping and non-overlapping bands. This in turn reduces the marginal benefit of serving users on non-overlapping band.

Fig.~\ref{fig_equilibrium_asymm} shows the equilibrium user quantities versus $W$ for two asymmetric cases ($\mA>\mB$) with different market size values $\mAB$. Note that SP2 has a smaller dedicated market compared to SP1. As in the symmetric case, when the bandwidth is limited, no SP joins the overlapping market. Interestingly, as $W$ keeps increasing, SP2 first enters the overlapping market, and then SP1 enters later.  From SP1's perspective, since it has a relatively larger dedicated market, it does not benefit from competing with SP2 in market $\AB$ if $W$ is not large enough. However, as $W$ keeps increasing, SP1 will  eventually enter the overlapping market to compete with SP2 directly, but will always serve few customers in that band than SP1 does.  For large $\mAB$, as shown in Fig.~\ref{fig_equilibrium_4}, the SPs once again may decrease the users served in the non-overlapping market, when they increase those served in the overlapping market.  Note here that the decrease for each SP begins at the value of $W$ when that SP begins serving customers in the overlapping market.

Fig.~\ref{fig_equilibrium_symm} and Fig.~\ref{fig_equilibrium_asymm} also suggest that the smaller the overlapping market, the higher the bandwidth required for either SP to join the overlapping market. This observation provides insight for a regulator, showing that providing sufficient bandwidth is the key to encouraging SPs to serve customers in overlapping markets.

\begin{figure}[tb]
    \centering
    \begin{subfigure}[b]{0.45\linewidth}
        \centering
        \includegraphics[width=\linewidth]{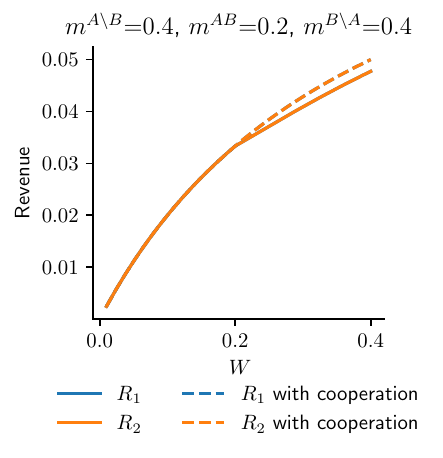}
        \caption{Small market $\AB$}
        \label{fig_revenue_1}
    \end{subfigure}
    \hfill
    \begin{subfigure}[b]{0.45\linewidth}
        \centering
        \includegraphics[width=\linewidth]{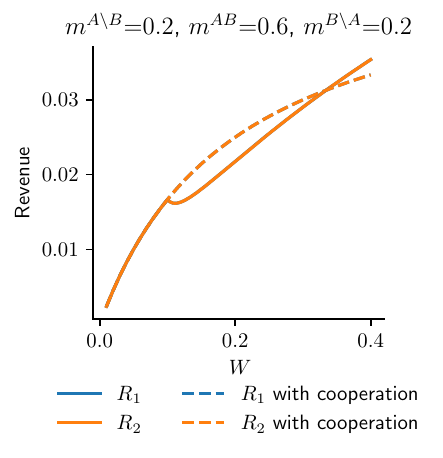}
        \caption{Large market $\AB$}
        \label{fig_revenue_2}
    \end{subfigure}
    \caption{SP's revenue versus bandwidth $W$ for two symmetric cases.}
    \label{fig_revenue_symm}
\end{figure}

\begin{figure}[tb]
    \centering
    \begin{subfigure}[b]{0.45\linewidth}
        \centering
        \includegraphics[width=\linewidth]{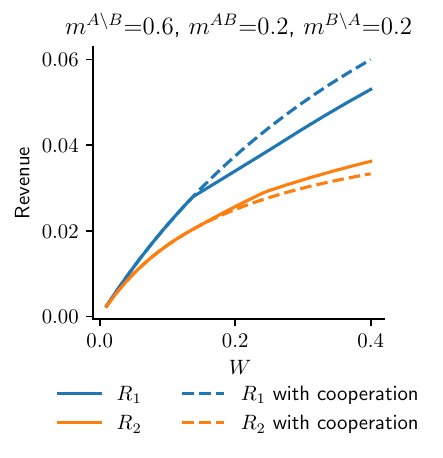}
        \caption{Small market $\AB$}
        \label{fig_revenue_3}
    \end{subfigure}
    \hfill
    \begin{subfigure}[b]{0.45\linewidth}
        \centering
        \includegraphics[width=\linewidth]{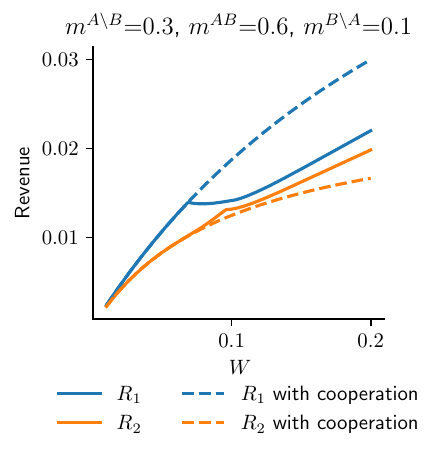}
        \caption{Large market $\AB$}
        \label{fig_revenue_4}
    \end{subfigure}
    \caption{SP's revenue versus bandwidth $W$ for two asymmetric cases.}
    \label{fig_revenue_asymm}
\end{figure}

\subsection{SP's Revenue}
Next, we show some examples of the SP's revenue versus $W$ in
Fig.~\ref{fig_revenue_symm} and Fig.~\ref{fig_revenue_asymm} for the same settings as in Fig.~\ref{fig_equilibrium_symm} and Fig.~\ref{fig_equilibrium_asymm}, respectively.  On these figures we also include the revenue obtained when the SPs ``cooperate" and do not serve any customers in the overlapping band for any value of $W$.\footnote{This is one simple example of how SPs could cooperate.  We leave the consideration of other approaches for future work.}

For a small overlapping market, the equilibrium revenue is an increasing function of $W$, as shown in Fig.~\ref{fig_revenue_1} and Fig.~\ref{fig_revenue_3}. For large enough $\mAB$, however, revenue may decrease shortly after either SP enters the overlapping market, as shown in Fig.~\ref{fig_revenue_2} and Fig.~\ref{fig_revenue_4}. This is because when one SP uses the overlapping market, it increases the latency cost for the other SP on its non-overlapping band and does not account for this externality when determining its quantity.
This can be viewed as a type of Braess's paradox where adding more resources (i.e., increasing $W$) leads to lower revenue.  Similar effects have been noted in model of markets without geographic separation (e.g.~\cite{nguyen2016cost}). The results here show that such behavior depends on the amount of geographic separation.

 As mentioned earlier, SPs have a preference for their dedicated markets when bandwidth is limited due to the higher latency cost in the overlapping area. When  SPs enter the overlapping market, by comparing their equilibrium revenue to the revenue obtained in the cooperation case, we can see that entering this market actually reduces both SPs revenue in the symmetric case for a range of $W$ (when $W$ is large enough both SPs would benefit from entering the overlapping market).\footnote{This is shown in Fig.~\ref{fig_revenue_2}; it also occurs for the scenario in Fig.~\ref{fig_revenue_1} if $W$ is large enough, though the range of $W$ needed is not shown.}
   This suggests in the symmetric setting the SPs may have an incentive to enter into an agreement to not serve customers in the overlapping region for some rang of $W$.
  For asymmetric markets as in Fig.~\ref{fig_revenue_asymm}, only the SP with a larger dedicated market (SP1) benefits from cooperation, while SP2's revenue decreases.  However, the total revenue increases for a range of $W$ under cooperation so that it would be profitable for SP1 to compensate SP2 for cooperating. Again, this only holds for a range of $W$ and if $W$ is large enough (not shown), the total revenue would increase if both SPs enter the overlapping market.

\subsection{Consumer Surplus}

\begin{figure}[tb]
    \centering   
    \begin{subfigure}[b]{0.45\linewidth}
        \centering
            \includegraphics[height=4cm]
        {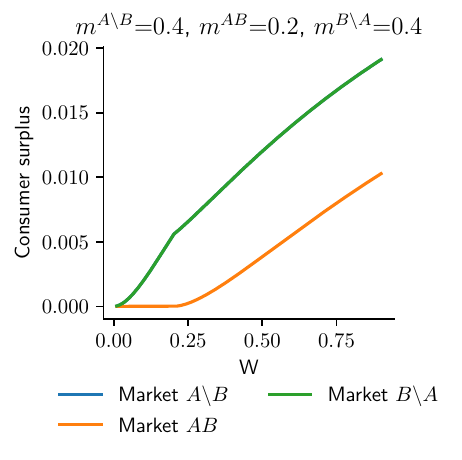}
        \caption{Sub-markets}
        \label{fig_cs_1}
    \end{subfigure}
    \hfill
    \begin{subfigure}[b]{0.45\linewidth}
        \centering
        \includegraphics[height=4cm]
            {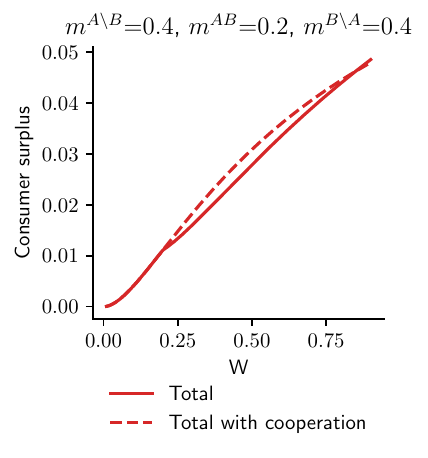}
        \caption{Whole market}
        \label{fig_cs_2}
    \end{subfigure}
    \caption{Consumer surplus versus bandwidth $W$ (symmetric case with small $\mAB$).}
    \label{fig_cs_symm_small}
\end{figure}

\begin{figure}[tb]
    \centering   
    \begin{subfigure}[b]{0.45\linewidth}
        \centering
            \includegraphics[height=4cm]
        {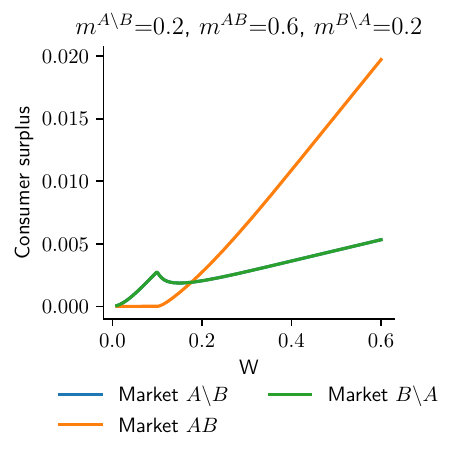}
        \caption{Sub-markets}
        \label{fig_cs_3}
    \end{subfigure}
    \hfill
    \begin{subfigure}[b]{0.45\linewidth}
        \centering
        \includegraphics[height=4cm]
            {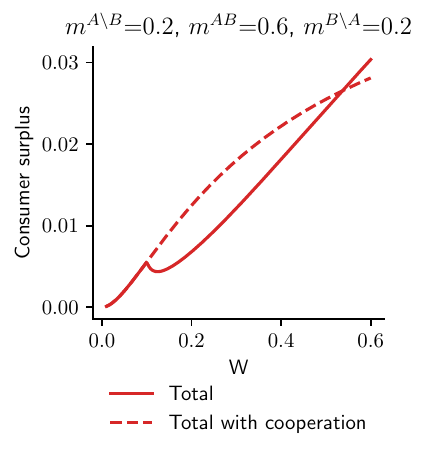}
        \caption{Whole market}
        \label{fig_cs_4}
    \end{subfigure}
    \caption{Consumer surplus versus bandwidth $W$ (symmetric case with large $\mAB$).}
    \label{fig_cs_symm_large}
\end{figure}

Next we consider the consumer surplus versus $W$ for the same set of scenarios. 
Fig.~\ref{fig_cs_symm_small} and Fig.~\ref{fig_cs_symm_large} illustrate how consumer surplus varies with $W$ for the two symmetric cases with small and large value of 
$\mAB$, respectively.  Figures \ref{fig_cs_1} and \ref{fig_cs_3}, show the consumer surplus for each sub-market, while Fig.~\ref{fig_cs_2} and Fig.~\ref{fig_cs_4} show the surplus for the entire market as well as the surplus obtained by the ``cooperation" case discussed previously. 

Focusing first on the individual sub-markets, 
the consumer surplus of market $\AB$ is zero until SPs enter this market when the bandwidth $W$ is large enough. Notice that after SPs start serving users in the overlapping market, the rate of increase of consumer surplus in their dedicated market slows down when $\mAB$ is small (Fig.~\ref{fig_cs_1}) or even decreases when $\mAB$ is large (Fig.~\ref{fig_cs_3}). Turning to the surplus for the entire market it can be seen that when $\mAB$ is small, this is increasing in $W$ (Fig.~\ref{fig_cs_2}), while when $\mAB$ is large (Fig.~\ref{fig_cs_4}), the consumer surplus can decrease when $W$ increases past the point where both SPs enter the overlapping market. Apparently, in this case the decrease in surplus on the non-overlapping markets is greater than the surplus gained on the overlapping markets.

When the SPs enter the overlapping market, they compete for customers in that market.  Surprisingly, as shown in Fig.~\ref{fig_cs_2} and Fig.~\ref{fig_cs_4}, such competition does not always increase overall consumer surplus compared to the case where the SPs cooperate and stay out of this market. While this competition can benefit users in the overlapping market, it reduces the surplus of customers in the non-overlapping markets compared to the cooperative case. Recall, as shown in Fig.~\ref{fig_revenue_symm}, the SPs can also improve their revenue by cooperating in this way.  Thus, cooperatively avoiding the overlapping market can improve both the SPs' revenue and the consumer surplus for a range of $W$. However, it is important to note that this increase in total consumer surplus comes at the expense of consumer surplus in the overlapping area, where no one is served. Consequently, regulators might consider subsidizing users in the overlapping area, using the benefits derived from such cooperation. Note also that, as shown in Fig.~\ref{fig_cs_2} and Fig.~\ref{fig_cs_4}, with sufficiently large $W$ cooperation results in lower surplus compared to the case where both SPs compete in the overlapping region. 

\begin{figure}[tb]
    \centering   
    \begin{subfigure}[b]{0.45\linewidth}
        \centering
            \includegraphics[height=4cm]
        {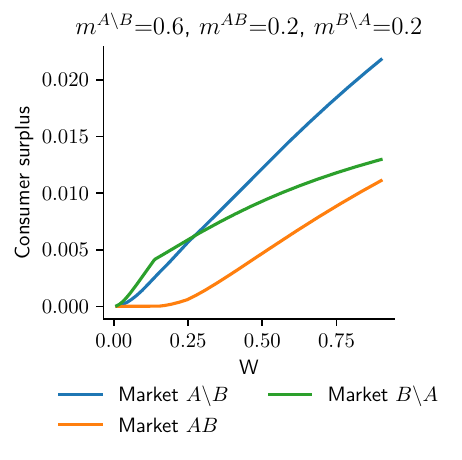}
        \caption{Sub-markets}
        \label{fig_cs_5}
    \end{subfigure}
    \hfill
    \begin{subfigure}[b]{0.45\linewidth}
        \centering
        \includegraphics[height=4cm]
            {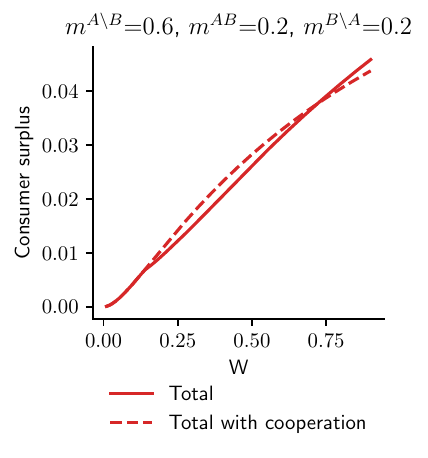}
        \caption{Whole market}
        \label{fig_cs_6}
    \end{subfigure}
    \caption{Consumer surplus versus bandwidth $W$ (asymmetric case with small $\mAB$).}
    \label{fig_cs_asymm_small}
\end{figure}

\begin{figure}[tb]
    \centering   
    \begin{subfigure}[b]{0.45\linewidth}
        \centering
            \includegraphics[height=4cm]
        {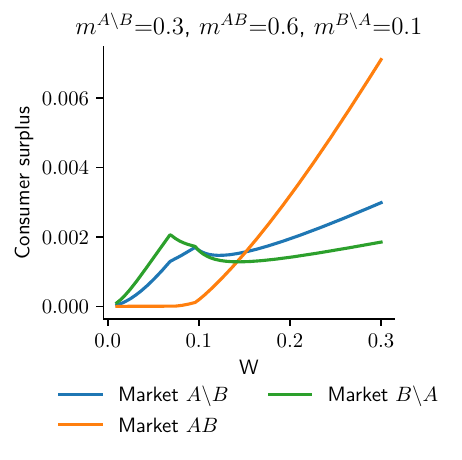}
        \caption{Sub-markets}
        \label{fig_cs_7}
    \end{subfigure}
    \hfill
    \begin{subfigure}[b]{0.45\linewidth}
        \centering
        \includegraphics[height=4cm]
            {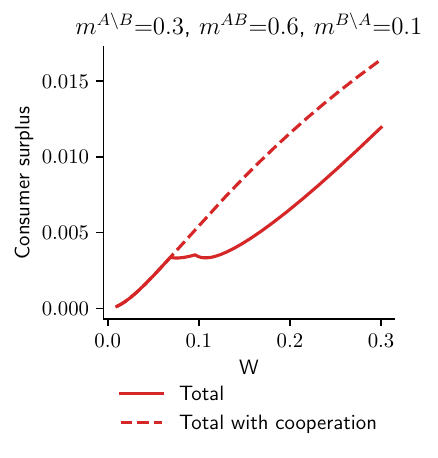}
        \caption{Whole market}
        \label{fig_cs_8}
    \end{subfigure}
    \caption{Consumer surplus versus bandwidth $W$ (asymmetric case with large $\mAB$).}
    \label{fig_cs_asymm_large}
\end{figure}

In Fig.~\ref{fig_cs_asymm_small} and Fig.~\ref{fig_cs_asymm_large} we show similar plots of consumer surplus for asymmetric scenarios with small and large values of $\mAB$, respectively. In both cases, for small values of $W$, there is no surplus generated in the overlapping market as the SPs do not compete in that market.  Also, note that for small values of $W$, the smaller SP (SP2) creates more surplus even though it is serving fewer customers.  This is due to the different demand curves in the two markets. SP2's demand curve in $\mB$ has a steeper slope compared to SP1's demand curve in $\mA$, meaning that it has to charge a lower delivered price to serve a similar number of customers, which in turn leads to larger welfare. 

In these asymmetric models, When $\mAB$ is small, consumer surplus increases as a function of bandwidth $W$. However, with a large $\mAB$ (Fig.~\ref{fig_cs_asymm_large}), consumer surplus is decreasing in $W$ around the values where the SPs enter the overlapping market. Once again, if the SPs cooperate and do not enter the overlapping market, this can increase the overall surplus for a range of $W$.

\subsection{Social Welfare}
\begin{figure}[tb]
    \centering   
    \begin{subfigure}[b]{0.45\linewidth}
        \centering
            \includegraphics[height=4cm]
        {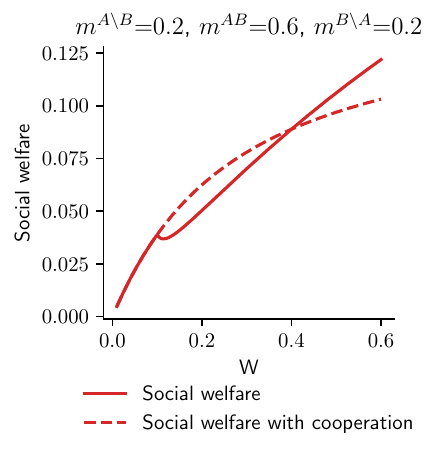}
        \caption{$\mA = \mB$}
        \label{fig_sw_1}
    \end{subfigure}
    \hfill
    \begin{subfigure}[b]{0.45\linewidth}
        \centering
        \includegraphics[height=4cm]
            {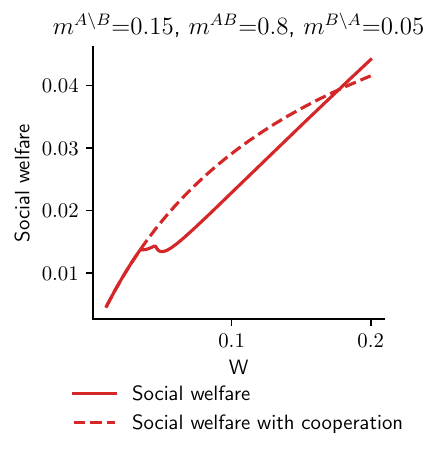}
        \caption{$\mA > \mB$}
        \label{fig_sw_2}
    \end{subfigure}
    \caption{Social welfare versus bandwidth $W$.}
    \label{fig_sw}
\end{figure}
Social welfare is defined as the sum of consumer surplus and the revenue of SPs. Recall that, when $\mAB$ is small, both consumer surplus and revenue increase as a function of bandwidth $W$, leading to a corresponding increase in social welfare. However, for large values of $\mAB$, increasing bandwidth in the market may not necessarily lead to higher social welfare. We illustrate this in Fig.~\ref{fig_sw} for the two large $\mAB$ scenerios. These show that 
 social welfare is not a monotonically increasing function of $W$. This is expected as we have already shown that both total revenue and consumer surplus may decrease in $W$ when the SPs first enter the overlapping market. We also show the welfare obtained when the SPs cooperate and do not enter this market, which yields a welfare improvement for a range of $W$.

\section{Conclusions}
We presented a model of a spectrum sharing market with two geographically separated SPs that have partially overlapping coverage areas. In this model, we proved that a unique Nash equilibrium always exists, where SPs avoid entering the overlapping market when bandwidth is limited. With sufficient bandwidth, SPs will enter the overlapping market.  However, the resulting revenue and consumer surplus may decrease as a function of the amount of bandwidth once the SPs enter the overlapping market.  We also showed that the revenue and consumer surplus can both be improved for a range of bandwidth values by allowing the SPs to cooperate and not enter this overlapping market. This suggests that such cooperation may be desirable, but would need to be balanced by considering the fairness to the users within this overlapping market.

\bibliographystyle{IEEEtran}
\bibliography{conference_101719}
\end{document}